\documentclass[11pt,a4paper]{article}

\usepackage[margin=25mm]{geometry}
\usepackage[T1]{fontenc}
\usepackage{lmodern}
\usepackage{amsmath,amssymb,amsthm}
\usepackage{microtype}
\usepackage[hidelinks]{hyperref}
\numberwithin{equation}{section}

\newtheorem{proposition}{Proposition}
\newtheorem{corollary}{Corollary}
\theoremstyle{definition}
\newtheorem{definition}{Definition}
\newtheorem{example}{Example}
\theoremstyle{remark}
\newtheorem{remark}{Remark}

\newcommand{\id}{\operatorname{id}}
\newcommand{\tr}{\operatorname{tr}}

\title{Contact canonoid maps and their conserved and dissipated quantities}
\author{R. Azuaje\\
Department of Physics, Faculty of Nuclear Sciences and Physical Engineering,\\
Czech Technical University in Prague. Břehová 7, 115 19 Praha 1, Czech Republic}
\date{}

\begin{document}
\maketitle

\begin{abstract}
We study conserved and dissipated quantities associated with contact
canonoid transformations and their extension to possibly singular smooth
maps. The defining condition is imposed on the pulled-back contact
one-form, without requiring it to remain contact. A polynomial
construction produces first integrals from one-forms having the same
dissipation rate as the original contact form. Two normalizations extend
this construction to maps whose associated Hamiltonians have different
dissipation rates. On the set where the associated Hamiltonian is nonzero,
we construct a smooth invariant endomorphism whose traces
are conserved. This construction includes zeros of the
original Hamiltonian, where the endomorphism vanishes. We relate the tensor traces to the coefficients of the
Hamiltonian-normalized polynomial and obtain an upper bound of
$n$ functionally independent trace invariants in dimension
$2n+1$. A five-dimensional singular map attains this bound. Examples include
singular maps for the damped free particle, a transformation with unequal
dissipation rates, the damped harmonic oscillator, and motion under
gravity with linear friction.
\end{abstract}

\noindent\textbf{Keywords:} contact Hamiltonian systems; canonoid maps;
dissipated quantities; first integrals; invariant tensors.

\section{Introduction}

Contact Hamiltonian mechanics provides a geometric description of
dissipative systems in which the Hamiltonian generally satisfies a dissipation law rather than a conservation law
\cite{LL2019,LS2017,BCT2017,GGMRR2020,GLR2022}. Functions obeying the same dissipation law
as the Hamiltonian are called dissipated quantities \cite{BG2021,GGMRR2020,LL2020}. Their quotients,
where defined, are constants of motion.

Canonoid transformations preserve the Hamiltonian character of a specified dynamics, while allowing the geometric structure used to describe it to change \cite{CR88,CR89,CFR2013}. In the symplectic setting, they provide alternative
symplectic structures and can generate first integrals
through invariant endomorphisms and the traces of
their powers \cite{carinena2024geometric}. Contact canonoid transformations and associated trace
invariants were studied in \cite{azuaje2023canonical}, where the contact trace construction was considered under good contact Hamiltonian
hypotheses. Their relationship with scaling symmetries and
infinitesimal dynamical symmetries was developed in
\cite{azuaje2024scaling}. Related constructions for locally conformal
symplectic systems appear in \cite{azuaje2026canonical}, while singular fouling maps and their invariant tensors provide
another extension beyond invertible transformations
\cite{azuaje2026fouling}.

In this manuscript we formulate a pullback condition for possibly singular contact
canonoid maps and distinguish the dissipation rates of the original
contact form and its pullback. We construct invariants from a polynomial
of top-degree forms and from an endomorphism defined by normalized
one-forms. The latter construction requires neither invertibility of
the map nor preservation of either unnormalized contact form. The tensor is defined wherever the associated function
$K=-(\Phi^*\eta)(X_H)$ is nonzero, including points where
$H=0$. On $\{HK\neq0\}$, it admits an equivalent
characterization in terms of normalized one-forms.
The polynomial construction also applies without division by the
original Hamiltonian.

The two constructions are related: for a specific normalization, the
polynomial coefficients and the tensor traces determine one another
on their common domain. We prove this relation, give its
dimension-three specialization, and illustrate the results with
explicit mechanical systems. The singular-map definition concerns
identities on the original manifold; a Hamiltonian description on
the image is a separate question.

\section{Contact canonoid maps}
\label{sec:maps}

Let $(M^{2n+1},\eta,H)$ be a smooth autonomous contact Hamiltonian
system, with $n\geq1$. Denote the Reeb vector field of $\eta$ by $R$,
and adopt the convention
\begin{equation}
\eta(X_H)=-H,\qquad
\iota_{X_H}d\eta=dH-R(H)\eta.
\label{eq:convention}
\end{equation}
Write $X=X_H$ and $a=R(H)$. Then
\begin{equation}
\mathcal L_X\eta=-a\eta,\qquad X(H)=-aH.
\label{eq:original-rate}
\end{equation}
A function $I$ is conserved if $X(I)=0$, and a function $f$ is
dissipated if $X(f)=-af$. Thus, the quotient of two dissipated
quantities is conserved wherever its denominator is nonzero.
Conversely, $HI$ is dissipated whenever $I$ is conserved. We propose the following definition.

\begin{definition}
A smooth map $\Phi:M\to M$ is a \emph{contact canonoid map} for $(M,\eta,H)$ if there exists $b\in C^\infty(M)$
such that
\begin{equation}
\mathcal L_X(\Phi^*\eta)=-b\,\Phi^*\eta.
\label{eq:canonoid}
\end{equation}
A contact canonoid map that is a local diffeomorphism is called a
\emph{contact canonoid transformation}. The map is
\emph{dissipation-preserving} if
\begin{equation}
\mathcal L_X(\Phi^*\eta)=-a\,\Phi^*\eta.
\label{eq:equal-rates}
\end{equation}
These definitions also apply to maps between open subsets of $M$.
\end{definition}

Set
\begin{equation}
\alpha=\Phi^*\eta,\qquad K=-\alpha(X).
\label{eq:alpha-K}
\end{equation}
Cartan's formula shows that \eqref{eq:canonoid} is equivalent to
\begin{equation}
\alpha(X)=-K,\qquad
\iota_Xd\alpha=dK-b\alpha.
\label{eq:pullback-equations}
\end{equation}
If $\Phi$ is a local diffeomorphism, $\alpha$ is contact.
Evaluation on its Reeb vector field $R_\alpha$ gives
$b=R_\alpha(K)$, so $X$ is the contact Hamiltonian vector field of
$K$ relative to $\alpha$. On a domain where $\Phi$ is invertible,
\begin{equation}
\Phi_*X=X_{\widetilde H}^{\eta},
\qquad
\widetilde H=K\circ\Phi^{-1}.
\label{eq:target}
\end{equation}
Thus the definition recovers the contact canonoid condition
for local diffeomorphisms
\cite{azuaje2023canonical,azuaje2024scaling}.

\begin{remark}
For singular maps, $\alpha$ need not be contact. Indeed,
\[
\alpha\wedge(d\alpha)^n
=\Phi^*\bigl(\eta\wedge(d\eta)^n\bigr),
\]
so $\alpha$ is contact at a point precisely when $T\Phi$ is
invertible there. The definition requires no Reeb vector field
for $\alpha$. It includes maps with $\alpha=0$, for which $K=0$;
in general, the function $b$ need not be unique.

The pullback condition alone does not ensure that $X$ descends
to the image. Projectability requires a vector field
$\widetilde X$ satisfying
$T\Phi\circ X=\widetilde X\circ\Phi$; in particular, the
pushed-forward vectors must agree along each fiber.
Even when the image is a smooth submanifold and projectability
holds, a contact Hamiltonian description there requires an
appropriate contact form and a descended Hamiltonian.
No such descent is assumed in the results below.
\end{remark}

\begin{proposition}
\label{prop:balance}
For a contact canonoid map,
\begin{equation}
X(K)=-bK.
\label{eq:K-rate}
\end{equation}
On $\{H\neq0\}$,
\begin{equation}
X\left(\frac KH\right)=(a-b)\frac KH.
\label{eq:ratio-rate}
\end{equation}
In particular, a dissipation-preserving map gives a dissipated
quantity $K$ and a conserved quantity $K/H$.
\end{proposition}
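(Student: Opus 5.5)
The plan is to evaluate the canonoid condition \eqref{eq:canonoid} on $X$ itself and then apply the quotient rule with \eqref{eq:original-rate}.

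For the first identity, I would use the elementary fact that for any one-form $\beta$ and vector field $Y$,
\[
(\mathcal L_Y\beta)(Y)=Y(\beta(Y))-\beta([Y,Y])=Y(\beta(Y)).
\]
Taking $\beta=\alpha=\Phi^*\eta$ and $Y=X$ gives $(\mathcal L_X\alpha)(X)=X(\alpha(X))=-X(K)$ by \eqref{eq:alpha-K}. On the other hand, \eqref{eq:canonoid} gives $(\mathcal L_X\alpha)(X)=-b\,\alpha(X)=bK$. Comparing the two expressions yields $X(K)=-bK$, which is \eqref{eq:K-rate}.

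An alternative route is to contract the second equation of \eqref{eq:pullback-equations} with $X$. Antisymmetry gives $d\alpha(X,X)=0$, so $0=X(K)-b\,\alpha(X)=X(K)+bK$. Either argument is a single line, and neither needs invertibility of $\Phi$, any contact property of $\alpha$, or uniqueness of $b$. This matters because the map may be singular, so $\alpha$ need not be contact. The identity holds for every admissible $b$.

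For the ratio, I would work on the open set $\{H\neq0\}$ and use the Leibniz rule together with $X(H)=-aH$ from \eqref{eq:original-rate}:
\[
X\!\left(\frac KH\right)=\frac{X(K)}{H}-\frac{K\,X(H)}{H^2}=\frac{-bK}{H}+\frac{aK}{H}=(a-b)\frac KH .
\]
This is \eqref{eq:ratio-rate}.

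In the dissipation-preserving case, the condition \eqref{eq:equal-rates} allows the choice $b=a$. Then \eqref{eq:K-rate} reads $X(K)=-aK$, so $K$ is dissipated, and \eqref{eq:ratio-rate} gives $X(K/H)=0$, so $K/H$ is conserved.

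There is no real obstacle here. The only point needing care is to check that each step uses only the pullback condition and not a Reeb field for $\alpha$.
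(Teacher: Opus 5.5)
Your proposal is correct and matches the paper's proof: your "alternative route" (contracting $\iota_X d\alpha = dK - b\alpha$ with $X$ and using $d\alpha(X,X)=0$) is exactly the argument used there, and your primary route via $(\mathcal L_X\alpha)(X)=X(\alpha(X))$ is the same computation through Cartan's formula. The quotient-rule step and the choice $b=a$ in the dissipation-preserving case also match the paper.
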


\begin{proof}
Evaluate \eqref{eq:pullback-equations} on $X$:
\[
0=d\alpha(X,X)=X(K)-b\alpha(X)=X(K)+bK.
\]
Equation \eqref{eq:ratio-rate} follows from the quotient rule
and \eqref{eq:original-rate}.
\end{proof}

\section{Polynomial invariants and normalizations}
\label{sec:polynomial}

In this section we construct conserved quantities from one-forms having
the same dissipation rate as the original contact form
$\eta$. Such a one-form determines a polynomial in an
auxiliary parameter through a pencil of top-degree forms;
its coefficients are first integrals, and multiplication
by $H$ produces dissipated quantities. We then introduce
two normalizations of the pullback $\Phi^*\eta$ that make
this construction applicable when its dissipation rate
differs from that of $\eta$. The contact-volume
normalization is defined on the regular locus of $\Phi$,
whereas the Hamiltonian normalization requires only
$K\neq0$ and accommodates both singular points of the map
and zeros of the original Hamiltonian.

\subsection{One-forms with the same dissipation rate}

\begin{proposition}
\label{prop:pencil}
Let $\beta$ be a smooth one-form satisfying
\begin{equation}
\mathcal L_X\beta=-a\beta.
\label{eq:beta-rate}
\end{equation}
Then $J=-\beta(X)$ is dissipated. Let
$\mu=\eta\wedge(d\eta)^n$, and define $C_1,\ldots,C_{n+1}$
by
\begin{equation}
(\eta+t\beta)\wedge(d\eta+t\,d\beta)^n
=\left(1+\sum_{j=1}^{n+1}C_jt^j\right)\mu,
\label{eq:pencil}
\end{equation}
where $t$ is an auxiliary parameter. Then
\begin{equation}
X(C_j)=0,\qquad
X(HC_j)=-aHC_j,\qquad j=1,\ldots,n+1.
\label{eq:pencil-invariants}
\end{equation}
Neither $\beta$ nor every member of the pencil is required
to be contact.
\end{proposition}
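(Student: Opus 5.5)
The plan has three steps: show that $J$ is dissipated by copying Proposition~\ref{prop:balance}, show that the pencil of top-degree forms scales at the rate $-(n+1)a$, and compare this with the rate of $\mu$.

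\textbf{Step 1: $J$ is dissipated.} Cartan's formula turns \eqref{eq:beta-rate} into
\[
\iota_Xd\beta=-d(\beta(X))-a\beta=dJ-a\beta .
\]
Evaluating this on $X$ gives $0=X(J)-a\beta(X)=X(J)+aJ$. This is exactly the computation in Proposition~\ref{prop:balance}, with $\alpha$ replaced by $\beta$ and $b$ by $a$.

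\textbf{Step 2: the pencil scales uniformly.} Put $\gamma_t=\eta+t\beta$, with $t$ a constant parameter. By \eqref{eq:original-rate} and \eqref{eq:beta-rate}, $\mathcal L_X\gamma_t=-a\gamma_t$. Since $\mathcal L_X$ commutes with $d$,
\[
\mathcal L_X d\gamma_t=-d(a\gamma_t)=-a\,d\gamma_t-da\wedge\gamma_t .
\]
The Lie derivative is a derivation, so
\[
\mathcal L_X\bigl(\gamma_t\wedge(d\gamma_t)^n\bigr)=-(n+1)a\,\gamma_t\wedge(d\gamma_t)^n-n\,\gamma_t\wedge da\wedge\gamma_t\wedge(d\gamma_t)^{n-1}.
\]
The last term vanishes because $\gamma_t\wedge\gamma_t=0$. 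Taking $\beta=0$ gives in particular $\mathcal L_X\mu=-(n+1)a\mu$.

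\textbf{Step 3: compare coefficients.} Write $\gamma_t\wedge(d\gamma_t)^n=P(t)\mu$ with $P(t)=1+\sum_j C_jt^j$, as in \eqref{eq:pencil}. Then
\[
\mathcal L_X(P\mu)=X(P)\mu+P\mathcal L_X\mu=\bigl(X(P)-(n+1)aP\bigr)\mu .
\]
By Step 2 this equals $-(n+1)aP\mu$. Because $\mu$ is a volume form, $X(P)=0$ for every $t$. Comparing coefficients of $t^j$ gives $X(C_j)=0$. Finally, $X(HC_j)=X(H)C_j=-aHC_j$.

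The only point needing care is the vanishing of the $da$ term in Step 2, which holds for every $t$. Nothing in the argument uses nondegeneracy of $\beta$ or of $\gamma_t$, which justifies the closing sentence of the statement.
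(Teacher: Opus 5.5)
Your proof is correct and follows essentially the same route as the paper: $\mathcal L_X(\eta+t\beta)=-a(\eta+t\beta)$, the $da$ term in $\mathcal L_X\bigl((\eta+t\beta)\wedge(d\eta+t\,d\beta)^n\bigr)$ vanishes because it contains two factors of $\eta+t\beta$, and comparison with $\mathcal L_X\mu=-(n+1)a\mu$ gives $X(\mathcal P(t))=0$ coefficient by coefficient. The only cosmetic difference is in showing that $J$ is dissipated: you contract Cartan's formula with $X$, as in Proposition~\ref{prop:balance}, whereas the paper uses $X(\beta(X))=(\mathcal L_X\beta)(X)$, which holds because $[X,X]=0$.
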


\begin{proof}
Since $[X,X]=0$,
\[
X(J)=-(\mathcal L_X\beta)(X)=a\beta(X)=-aJ.
\]
Put $\eta_t=\eta+t\beta$ and $\mu_t=\eta_t\wedge(d\eta_t)^n$.
Equations \eqref{eq:original-rate} and \eqref{eq:beta-rate} give
\[
\mathcal L_X\eta_t=-a\eta_t,\qquad
\mathcal L_Xd\eta_t=-da\wedge\eta_t-a\,d\eta_t.
\]
Consequently,
\[
\mathcal L_X\mu_t=-(n+1)a\mu_t,
\]
because the term containing $da$ has two factors $\eta_t$.
In particular, $\mathcal L_X\mu=-(n+1)a\mu$.
Writing $\mu_t=\mathcal P(t)\mu$ yields $X(\mathcal P(t))=0$.
Comparing coefficients proves $X(C_j)=0$; multiplication
by $H$ gives the second identity in
\eqref{eq:pencil-invariants}.
\end{proof}

The first and last coefficients satisfy
\begin{align}
C_1\mu
&=\beta\wedge(d\eta)^n
+n\eta\wedge d\beta\wedge(d\eta)^{n-1},\\
C_{n+1}\mu&=\beta\wedge(d\beta)^n.
\label{eq:last-coefficient}
\end{align}

\begin{corollary}
\label{cor:equal-rates}
For a dissipation-preserving contact canonoid map,
take $\beta=\Phi^*\eta$ in Proposition~\ref{prop:pencil}.
Then $J=K$ is dissipated and the coefficients in
\eqref{eq:pencil} are conserved. This construction requires
no invertibility and no division by $H$, $K$, or a Jacobian.
\end{corollary}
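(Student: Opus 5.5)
The corollary follows by specializing Proposition~\ref{prop:pencil} to $\beta=\alpha=\Phi^*\eta$. The hypotheses of that proposition consist only of the smoothness of $\beta$ and the rate condition \eqref{eq:beta-rate}. So the plan is to check these two hypotheses, identify $J$ with $K$, and then observe which operations the argument avoids.

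First, $\alpha=\Phi^*\eta$ is smooth because $\Phi$ is smooth and $\eta$ is a smooth one-form. The pullback of a smooth form under any smooth map is smooth, with no rank condition on $T\Phi$. Second, the dissipation-preserving condition \eqref{eq:equal-rates} says exactly $\mathcal L_X\alpha=-a\alpha$, which is \eqref{eq:beta-rate} with $\beta=\alpha$. Proposition~\ref{prop:pencil} therefore applies. It gives that $J=-\alpha(X)$ is dissipated and that every $C_j$ defined by \eqref{eq:pencil} satisfies \eqref{eq:pencil-invariants}. By the definition \eqref{eq:alpha-K}, $J=-\alpha(X)=K$. Alternatively, one can read off that $K$ is dissipated from \eqref{eq:K-rate} in Proposition~\ref{prop:balance} with $b=a$.

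For the final sentence of the corollary, I will trace through the proof of Proposition~\ref{prop:pencil} and check the operations it uses. It uses only Lie derivatives, wedge products, and the Cartan identities for $\eta_t=\eta+t\alpha$. The coefficients $C_j$ are defined by expanding $\mu_t$ against $\mu=\eta\wedge(d\eta)^n$. This $\mu$ is a nowhere-vanishing volume form because $\eta$ is contact, so dividing by $\mu$ is always legitimate, whatever the rank of $\Phi$. At no stage does the argument invert $\Phi$, require $\alpha$ to be contact (so it needs no Reeb field $R_\alpha$), or divide by $H$, $K$, or the Jacobian factor appearing in $\alpha\wedge(d\alpha)^n=\Phi^*\mu$.

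The only point needing care is this last one: making sure that $\mathcal L_X\mu_t=-(n+1)a\mu_t$ holds even when $\alpha$ is degenerate. I would check it directly. The $da$ term appears as $-da\wedge\eta_t$ inside $\mathcal L_X d\eta_t$. Any term containing it then contains $\eta_t$ twice, once from that factor and once from the leading $\eta_t$ in $\mu_t$, so it vanishes for every $t$ regardless of nondegeneracy. With that verified, the corollary follows.
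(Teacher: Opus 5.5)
Your proposal is correct and takes essentially the same route as the paper, which treats the corollary as an immediate specialization of Proposition~\ref{prop:pencil}: the dissipation-preserving condition is exactly the rate hypothesis with $\beta=\alpha$, and $J=-\alpha(X)=K$ by definition. Your additional check that the $da$-terms vanish even when $\alpha$ is degenerate, together with the fact that only division by the nowhere-vanishing $\mu$ occurs, is the same observation the paper makes in its proof of Proposition~\ref{prop:pencil}.
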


\subsection{Contact-volume normalization}

\begin{proposition}
\label{prop:volume}
Let $\Phi$ be a contact canonoid transformation and define
the nonvanishing function $\rho$ by
\begin{equation}
\Phi^*\mu=\rho\mu.
\label{eq:rho}
\end{equation}
Then
\begin{equation}
\beta=|\rho|^{-1/(n+1)}\Phi^*\eta
\label{eq:volume-beta}
\end{equation}
satisfies \eqref{eq:beta-rate}. In particular,
\begin{equation}
J_\Phi=\frac{K}{|\rho|^{1/(n+1)}},
\qquad
I_\Phi=\frac{K}{H|\rho|^{1/(n+1)}}
\label{eq:volume-invariants}
\end{equation}
are respectively dissipated and conserved, with $I_\Phi$
defined on $\{H\neq0\}$. The polynomial coefficients
associated with \eqref{eq:volume-beta} are conserved
without requiring $H\neq0$.
\end{proposition}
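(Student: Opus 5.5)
The plan is to compute the dissipation rate of $\rho$ and then read off everything else from Propositions~\ref{prop:balance} and \ref{prop:pencil}. Since $\Phi$ is a local diffeomorphism, $\alpha=\Phi^*\eta$ is contact. Moreover $\alpha\wedge(d\alpha)^n=\Phi^*\mu=\rho\mu$ with $\rho$ nowhere zero, so $|\rho|$ is smooth and positive and $|\rho|^{-1/(n+1)}$ is smooth.

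\emph{Step 1: the rate of $\rho$.} From \eqref{eq:canonoid} we have $\mathcal L_X\alpha=-b\alpha$. Hence $\mathcal L_Xd\alpha=-db\wedge\alpha-b\,d\alpha$. Exactly as in the proof of Proposition~\ref{prop:pencil}, the $db$ terms die because they contain $\alpha$ twice. This gives
\[
\mathcal L_X(\alpha\wedge(d\alpha)^n)=-(n+1)b\,\alpha\wedge(d\alpha)^n .
\]
From \eqref{eq:original-rate} the same argument gives $\mathcal L_X\mu=-(n+1)a\mu$. Writing $\mathcal L_X(\rho\mu)=X(\rho)\mu+\rho\mathcal L_X\mu$ and comparing yields
\[
X(\rho)=(n+1)(a-b)\rho .
\]
The same equation holds for $|\rho|$, since $\rho$ has locally constant sign. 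Therefore $X(|\rho|^{-1/(n+1)})=(b-a)|\rho|^{-1/(n+1)}$.

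\emph{Step 2: $\beta$ has rate $a$.} Put $\lambda=|\rho|^{-1/(n+1)}$. Then
\[
\mathcal L_X(\lambda\alpha)=X(\lambda)\alpha+\lambda\mathcal L_X\alpha=\bigl((b-a)-b\bigr)\lambda\alpha=-a\beta,
\]
which is \eqref{eq:beta-rate}.

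\emph{Step 3: consequences.} Proposition~\ref{prop:pencil} applies directly to $\beta$. It shows that $J=-\beta(X)=\lambda K=J_\Phi$ is dissipated. Dividing by $H$ on $\{H\neq0\}$ gives that $I_\Phi$ is conserved, since the quotient of two dissipated quantities is conserved. The same proposition gives conservation of the coefficients $C_j$ everywhere, with no division by $H$. Alternatively, one can combine \eqref{eq:ratio-rate} with Step 1 to see that the rates cancel in $I_\Phi$.

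The only delicate point is Step 1. There one must make sure that the cross terms in $\mathcal L_X(d\alpha)^n$ involving $db\wedge\alpha$ vanish after wedging with $\alpha$. One must also handle the absolute value, which is justified because $\rho\neq0$. The rest is routine.
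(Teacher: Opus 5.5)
Your proposal is correct and follows the paper's proof essentially step for step: derive $X(\rho)=(n+1)(a-b)\rho$ from the Lie derivative of $\alpha\wedge(d\alpha)^n$, deduce $X(|\rho|^{-1/(n+1)})=(b-a)|\rho|^{-1/(n+1)}$, conclude $\mathcal L_X\beta=-a\beta$, then apply Proposition~\ref{prop:pencil} and divide $J_\Phi$ by $H$. The only point to change is notational: your name $\lambda$ for the normalizing factor clashes with the paper's later $\lambda=-\eta/H$, though this does not affect the argument.
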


\begin{proof}
The same calculation as in Proposition~\ref{prop:pencil}
gives
\[
\mathcal L_X\bigl(\alpha\wedge(d\alpha)^n\bigr)
=-(n+1)b\,\alpha\wedge(d\alpha)^n.
\]
Using $\alpha\wedge(d\alpha)^n=\rho\mu$, we obtain
\begin{equation}
X(\rho)=(n+1)(a-b)\rho.
\label{eq:rho-rate}
\end{equation}
Thus $c=|\rho|^{-1/(n+1)}$ satisfies $X(c)=(b-a)c$,
and
\[
\mathcal L_X(c\alpha)
=X(c)\alpha+c\mathcal L_X\alpha=-ac\alpha.
\]
Apply Proposition~\ref{prop:pencil} and divide the
dissipated quantity $J_\Phi$ by $H$.
\end{proof}

\begin{remark}
For Darboux coordinates with the same ordering convention
on source and target, $\rho=\det D\Phi$.
For a singular canonoid map, \eqref{eq:rho} still defines
a smooth function $\rho$, and \eqref{eq:rho-rate} remains
valid. Proposition~\ref{prop:volume} applies on
$\{\rho\neq0\}$, but its normalized expressions need not
extend across $\rho=0$.

For the normalization \eqref{eq:volume-beta},
\[
\beta\wedge(d\beta)^n
=\operatorname{sgn}(\rho)\mu.
\]
Hence its highest polynomial coefficient is
$C_{n+1}=\operatorname{sgn}(\rho)$ and is locally constant.
\end{remark}

\subsection{Hamiltonian normalization}

\begin{proposition}
\label{prop:ham-normalization}
For any contact canonoid map, the one-form
\begin{equation}
\beta=\frac HK\Phi^*\eta
\label{eq:ham-beta}
\end{equation}
satisfies \eqref{eq:beta-rate} on $\{K\neq0\}$.
Its coefficients in \eqref{eq:pencil} are therefore
conserved, including at singular points of $\Phi$.
No assumption $H\neq0$ is needed.
\end{proposition}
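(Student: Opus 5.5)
The plan is to mimic the proof of Proposition~\ref{prop:volume}: write $\beta=c\alpha$ with $c=H/K$ on the open set $\{K\neq0\}$, compute $X(c)$, and check that it has exactly the rate needed to convert the rate $b$ of $\alpha$ into the rate $a$ of $\eta$. Then invoke Proposition~\ref{prop:pencil} with this $\beta$.

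First, I compute the rate of $c$. By \eqref{eq:original-rate} we have $X(H)=-aH$, and by Proposition~\ref{prop:balance} we have $X(K)=-bK$; both identities hold on all of $M$. On $\{K\neq0\}$ the quotient rule gives
\[
X\!\left(\frac HK\right)=\frac{X(H)K-HX(K)}{K^2}=\frac{-aHK+bHK}{K^2}=(b-a)\frac HK .
\]
Only $K$ appears in a denominator, so no assumption $H\neq0$ is used. Next, using the Leibniz rule and \eqref{eq:canonoid},
\[
\mathcal L_X(c\alpha)=X(c)\alpha+c\,\mathcal L_X\alpha=(b-a)c\alpha-bc\alpha=-ac\alpha,
\]
which is \eqref{eq:beta-rate} on $\{K\neq0\}$.

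Finally, Proposition~\ref{prop:pencil} is a pointwise computation with Lie derivatives. It therefore applies verbatim on the open $X$-invariant domain $\{K\neq0\}$, and it gives $X(C_j)=0$ there. The set is $X$-invariant because $X(K)=-bK$, so $K$ stays nonzero along integral curves; openness alone already suffices for the pointwise identities. Proposition~\ref{prop:pencil} requires neither $\beta$ nor $\alpha$ to be contact, and the argument above uses neither invertibility of $T\Phi$ nor a Reeb field for $\alpha$. Hence the conclusion holds at singular points of $\Phi$ and at zeros of $H$, where $\beta$ simply vanishes. There is no serious obstacle. The only point needing care is to note that $\beta$ is smooth on $\{K\neq0\}$ and that the non-uniqueness of $b$ is harmless: any admissible $b$ satisfies both \eqref{eq:canonoid} and $X(K)=-bK$, and these two identities are all the argument uses.
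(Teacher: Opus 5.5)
Your proposal is correct and matches the paper's proof: the quotient rule with \eqref{eq:original-rate} and \eqref{eq:K-rate} gives $X(H/K)=(b-a)H/K$, the Leibniz rule then gives $\mathcal L_X\bigl((H/K)\alpha\bigr)=-a(H/K)\alpha$, and Proposition~\ref{prop:pencil} is applied on $\{K\neq0\}$. Your added remarks on the $X$-invariance of $\{K\neq0\}$ and the non-uniqueness of $b$ are accurate but not needed.
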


\begin{proof}
Equations \eqref{eq:original-rate} and \eqref{eq:K-rate}
give
\[
X\left(\frac HK\right)=(b-a)\frac HK.
\]
Therefore,
\[
\mathcal L_X\left(\frac HK\alpha\right)
=(b-a)\frac HK\alpha-b\frac HK\alpha
=-a\frac HK\alpha.
\]
Apply Proposition~\ref{prop:pencil}.
\end{proof}

Here $-\beta(X)=H$, so the scalar contraction itself
provides no additional quantity. The polynomial
coefficients may nevertheless be nonconstant.
More generally, any smooth function $c$ satisfying
$X(c)=(b-a)c$ produces a one-form $c\alpha$ obeying
\eqref{eq:beta-rate}.

\section{An invariant tensor and its polynomial invariants}
\label{sec:tensor}

We now develop a tensorial counterpart of the
Hamiltonian-normalized polynomial construction.
For a contact canonoid map, the pulled-back form
$\alpha=\Phi^*\eta$ and the associated function
$K=-\alpha(X_H)$ determine a smooth invariant endomorphism
on $\{K\neq0\}$. Its traces are conserved quantities,
and their products with $H$ are dissipated quantities.
The construction accommodates singular maps and remains
smooth at zeros of $H$, where the endomorphism vanishes.
We establish the precise relation between its traces and
the coefficients of the Hamiltonian-normalized polynomial,
showing that these constructions encode the same spectral
information and yield at most $n$ functionally independent
trace invariants in dimension $2n+1$. We conclude with an
explicit formula in dimension three.

\subsection{The intrinsic tensor}

\begin{proposition}
\label{prop:tensor}
Let $\Phi$ be a contact canonoid map for $(M,\eta,H)$, and set
\[
X=X_H,\qquad a=R(H),\qquad
\alpha=\Phi^*\eta,\qquad K=-\alpha(X).
\]
On the open set
\[
V=\{x\in M:K(x)\neq0\},
\]
define
\[
\sigma=-\frac{\alpha}{K},
\qquad
\mathcal D=\ker(\eta|_V).
\]
There exists a unique smooth endomorphism $N:TV\to TV$ such that
\[
\eta(NY)=0,
\qquad
d\eta(NY,Z)=-H\,d\sigma(Y,Z)
\]
for every vector field $Y$ on $V$ and every section $Z$ of
$\mathcal D$. Moreover,
\[
NX=0,\qquad \mathcal L_XN=0,
\]
and $N_x=0$ at every point $x\in V$ for which $H(x)=0$.

On the open subset
\[
U=\{x\in M:H(x)K(x)\neq0\},
\]
put $\lambda=-\eta/H$. Then $N$ is equivalently characterized by
\[
\lambda(NY)=0,
\qquad
d\lambda(NY,Z)=d\sigma(Y,Z)
\]
for all vector fields $Y,Z$ on $U$.

For every integer $k\geq1$, the functions
\[
I_k=\operatorname{tr}(N^k),
\qquad
D_k=H\operatorname{tr}(N^k)
\]
are smooth on $V$ and satisfy
\[
X(I_k)=0,\qquad X(D_k)=-aD_k.
\]
No invertibility of $\Phi$ is assumed.
\end{proposition}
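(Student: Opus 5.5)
Existence and uniqueness follow from the splitting $TV=\mathcal D\oplus\langle R\rangle$ and the nondegeneracy of $d\eta|_{\mathcal D}$. For fixed $Y$, the map $Z\mapsto -H\,d\sigma(Y,Z)$ is a one-form on $\mathcal D$. Since $d\eta$ restricted to $\mathcal D$ is symplectic, there is a unique $W\in\mathcal D$ with $d\eta(W,Z)=-H\,d\sigma(Y,Z)$ for all $Z\in\mathcal D$. Set $NY=W$; this defines $N$ pointwise, and $N$ is $C^\infty(V)$-linear in $Y$ and smooth because $\sigma$ is smooth on $V$. The formula shows at once that $N_x=0$ wherever $H(x)=0$.

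For $NX=0$, it suffices to check that $d\sigma(X,Z)=0$ for all $Z\in\mathcal D$. From \eqref{eq:pullback-equations} we get
\[
\iota_X d\sigma=-\frac{\iota_Xd\alpha}{K}+\frac{dK\wedge\alpha}{K^2}(X,\cdot)
=-\frac{dK-b\alpha}{K}+\frac{X(K)\alpha+K\,dK}{K^2}.
\]
Using \eqref{eq:K-rate}, this collapses to $0$, so $\iota_Xd\sigma=0$ and hence $NX=0$. The same computation together with $\sigma(X)=1$ gives $\mathcal L_X\sigma=0$, and therefore $\mathcal L_Xd\sigma=0$.

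For invariance, we Lie-differentiate the defining identities. The conditions say $\eta\circ N=0$ and $(\iota_{N\cdot}d\eta)|_{\mathcal D}=-H\,d\sigma|_{\mathcal D}$. We use $\mathcal L_X\eta=-a\eta$, $\mathcal L_Xd\eta=-da\wedge\eta-a\,d\eta$, $X(H)=-aH$ and $\mathcal L_X d\sigma=0$. We also use that $\mathcal L_X$ preserves $\mathcal D$, since $\eta([X,Z])=X(\eta(Z))-(\mathcal L_X\eta)(Z)=0$ for $Z\in\mathcal D$. With these, $N'=\mathcal L_XN$ satisfies:
\begin{itemize}
\item $\eta(N'Y)=\mathcal L_X(\eta(NY))-(\mathcal L_X\eta)(NY)-\eta(N[X,Y])=0$;
\item $d\eta(N'Y,Z)=\bigl(-X(H)\,d\sigma-a H\,d\sigma\bigr)(Y,Z)+\ldots$, where the $da\wedge\eta$ term vanishes because $\eta(NY)=\eta(Z)=0$.
\end{itemize}
In the second item the terms cancel to give $d\eta(N'Y,Z)=0$ for all $Z\in\mathcal D$. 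Uniqueness applied to the zero right-hand side then gives $N'=0$. This bookkeeping is the main obstacle: one must track carefully that each term involving $[X,Y]$ or $[X,Z]$ cancels against the defining identity applied to those brackets.

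On $U$, the equivalent characterization follows from $d\lambda=-d\eta/H+dH\wedge\eta/H^2$. For $Z\in\mathcal D$ and $\eta(NY)=0$, this gives $d\lambda(NY,Z)=-d\eta(NY,Z)/H=d\sigma(Y,Z)$. For $Z=R$, the characterization needs $d\lambda(NY,R)=d\sigma(Y,R)$. The left side equals $-R(H)\eta(NY)/H^2$-type terms plus $dH(NY)\eta(R)/H^2=dH(NY)/H^2$, so we must verify $dH(NY)=H^2d\sigma(Y,R)$. We plan to obtain this by decomposing $Z$ along $X$: write $R=(X+H R)/H-X/H$, where $X+HR\in\mathcal D$, and use $NX=0$, $\iota_Xd\sigma=0$ and $\iota_Xd\eta=dH-a\eta$. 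Evaluating the $\mathcal D$-identity on $Z=X+HR$ gives $d\eta(NY,X)=-dH(NY)$, which produces the required relation.

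Finally, the trace functions $\tr(N^k)$ are smooth on $V$. Since $\mathcal L_X$ commutes with contraction, $\mathcal L_XN=0$ gives $X(\tr N^k)=\tr(\mathcal L_X N^k)=0$, and $X(HI_k)=-aHI_k$ then follows from \eqref{eq:original-rate}. No step uses invertibility of $\Phi$.
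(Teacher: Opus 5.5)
Your proposal is correct and follows the paper's proof essentially step for step. You use nondegeneracy of $d\eta|_{\mathcal D}$ for existence, uniqueness and $N_x=0$ where $H(x)=0$. You use $\iota_Xd\sigma=0$ for $NX=0$: you compute it directly and then deduce $\mathcal L_X\sigma=0$, whereas the paper obtains $\mathcal L_X\sigma=0$ first and then applies Cartan. You Lie-differentiate both defining identities for $\mathcal L_XN=0$, and you commute $\mathcal L_X$ with contraction for the traces.

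The only real deviation is on $U$. There you handle the complementary direction $Z=R$ by decomposing $R$ through $X+HR\in\mathcal D$. The paper instead uses $TU=\mathcal D|_U\oplus\operatorname{span}\{X\}$ and simply notes that both sides vanish at $Z=X$, since $\iota_Xd\lambda=\iota_Xd\sigma=0$; this is shorter and has the same content. You should also add the one-line converse, that restricting to $Z\in\mathcal D$ recovers the defining identity, to justify ``equivalently characterized.''
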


\begin{proof}
The canonoid condition and the balance law for $K$ give
\[
\mathcal L_X\alpha=-b\alpha,
\qquad
X(K)=-bK.
\]
Consequently,
\[
\mathcal L_X\sigma=0,
\qquad
\sigma(X)=1.
\]
Cartan's formula therefore yields
\[
\iota_Xd\sigma=0.
\]

Since $\eta$ is contact, the restriction of $d\eta$ to
$\mathcal D$ is nondegenerate. Thus, for each vector field $Y$,
there is a unique section $NY$ of $\mathcal D$ satisfying
\[
d\eta(NY,Z)=-H\,d\sigma(Y,Z),
\qquad Z\in\Gamma(\mathcal D).
\]
The inverse of the corresponding bundle isomorphism
$\mathcal D\to\mathcal D^*$ is smooth, so this construction
defines a smooth endomorphism $N$. Taking $Y=X$ gives $NX=0$.
At a point where $H=0$, nondegeneracy gives $N=0$.

We next prove invariance. Since
\[
\mathcal L_X\eta=-a\eta,
\]
the distribution $\mathcal D$ is invariant under $X$.
Lie differentiation of $\eta(NY)=0$ gives
\[
\eta\bigl((\mathcal L_XN)Y\bigr)=0.
\]
Furthermore,
\[
\mathcal L_Xd\eta=-da\wedge\eta-a\,d\eta,
\qquad
\mathcal L_Xd\sigma=0,
\qquad
X(H)=-aH.
\]
We may therefore Lie differentiate the defining identity for
$N$, with its second argument restricted to $\mathcal D$.
For $Z\in\Gamma(\mathcal D)$ this gives
\[
(\mathcal L_Xd\eta)(NY,Z)
+d\eta\bigl((\mathcal L_XN)Y,Z\bigr)
=aH\,d\sigma(Y,Z).
\]
Both $NY$ and $Z$ lie in $\mathcal D$, so
\[
(\mathcal L_Xd\eta)(NY,Z)
=-a\,d\eta(NY,Z)
=aH\,d\sigma(Y,Z).
\]
Hence
\[
d\eta\bigl((\mathcal L_XN)Y,Z\bigr)=0
\qquad\text{for every }Z\in\Gamma(\mathcal D).
\]
Together with
$\eta((\mathcal L_XN)Y)=0$, nondegeneracy on $\mathcal D$
implies $\mathcal L_XN=0$.

On $U$, the form $\lambda=-\eta/H$ satisfies
\[
\lambda(X)=1,\qquad
\mathcal L_X\lambda=0,\qquad
\iota_Xd\lambda=0.
\]
Since $\ker\lambda=\mathcal D|_U$, we have
\[
TU=\mathcal D|_U\oplus\operatorname{span}\{X\}.
\]
For $Z\in\Gamma(\mathcal D|_U)$, the identity
\[
d\lambda=-\frac{1}{H}d\eta
+\frac{1}{H^2}dH\wedge\eta
\]
and the condition $\eta(NY)=0$ give
\[
d\lambda(NY,Z)
=-\frac{1}{H}d\eta(NY,Z)
=d\sigma(Y,Z).
\]
Both sides vanish when $Z=X$, so the same identity holds
for arbitrary $Z$. Conversely, restricting this identity
to $Z\in\Gamma(\mathcal D|_U)$ recovers the defining
equation for $N$.

Finally, Lie differentiation is a derivation for composition
and commutes with contraction. Therefore,
\[
X\bigl(\operatorname{tr}(N^k)\bigr)
=\operatorname{tr}\bigl(\mathcal L_X(N^k)\bigr)=0.
\]
Multiplication by $H$, together with $X(H)=-aH$, proves
the balance law for $D_k$.
\end{proof}

\begin{remark}
These constructions do not by themselves establish
nonconstancy, functional independence, or involutivity
of the resulting invariants. In particular, no
Nijenhuis condition is needed for
$\mathcal L_XN=0$ or conservation of its traces.

The normalization is unchanged if the same vector
field is represented by
$\widehat\eta=f\eta$ and $\widehat H=fH$, with $f$
nowhere zero. Indeed,
$\widehat\alpha=(f\circ\Phi)\alpha$ and
$\widehat K=(f\circ\Phi)K$, so $\lambda$, $\sigma$,
and $N$ remain unchanged.
\end{remark}

\subsection{Relation between the polynomial coefficients and traces}

\begin{proposition}
\label{prop:spectral}
Use the normalization $\beta=(H/K)\alpha$ from
Proposition~\ref{prop:ham-normalization}, and write
\[
\mathcal P(t)=1+\sum_{j=1}^{n+1}C_jt^j
\]
for its polynomial in \eqref{eq:pencil}.
On $V=\{K\neq0\}$, let
$\mathcal D=\ker(\eta|_V)$ and $A=N|_{\mathcal D}$. There exists a polynomial
$Q(t)=1+q_1t+\cdots+q_nt^n$ such that
\begin{equation}
\mathcal P(t)=(1+t)Q(t),\qquad
\det(\id_{\mathcal D}+tA)=Q(t)^2.
\label{eq:spectral-relation}
\end{equation}
The coefficients of $Q$ and the first $n$ traces of
$N$ determine one another. In particular,
\begin{align}
\tr N&=2(C_1-1),\\
\tr(N^2)
&=2(C_1-1)^2-4(C_2-C_1+1).
\label{eq:first-trace-relations}
\end{align}
All higher traces are functions of $q_1,\ldots,q_n$.
Thus a single tensor $N$ gives at most $n$
functionally independent trace invariants.
\end{proposition}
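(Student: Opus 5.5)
The plan is to prove \eqref{eq:spectral-relation} first on the dense open set $U=\{HK\neq0\}$ using the normalized forms $\lambda$ and $\sigma$ of Proposition~\ref{prop:tensor}, then extend it to $V$ by continuity, and finally read off the trace relations from $\det(\id_{\mathcal D}+tA)=Q(t)^2$ via Newton's identities.

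\emph{Step 1: reduction on $U$.} On $U$ we have $\eta=-H\lambda$ and, since $\sigma=-\alpha/K$, also $\beta=(H/K)\alpha=-H\sigma$. Hence $\eta+t\beta=-H(\lambda+t\sigma)$ and
\[
d(\eta+t\beta)=-dH\wedge(\lambda+t\sigma)-H(d\lambda+t\,d\sigma).
\]
In the top-degree product every $dH\wedge(\lambda+t\sigma)$ term dies against the leading factor $\lambda+t\sigma$. Therefore
\[
(\eta+t\beta)\wedge(d\eta+t\,d\beta)^n=(-H)^{n+1}(\lambda+t\sigma)\wedge(d\lambda+t\,d\sigma)^n .
\]
Taking $t=0$ gives $\mu=(-H)^{n+1}\lambda\wedge(d\lambda)^n$, so
\[
\mathcal P(t)\,\lambda\wedge(d\lambda)^n=(\lambda+t\sigma)\wedge(d\lambda+t\,d\sigma)^n .
\]

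\emph{Step 2: evaluation on $TU=\operatorname{span}\{X\}\oplus\mathcal D$.} Both sides are evaluated on $X$ together with a basis of $\mathcal D$. Since $\lambda(X)=\sigma(X)=1$ and $\iota_Xd\lambda=\iota_Xd\sigma=0$, contraction with $X$ gives
\[
(1+t)\,(\omega+t\,\omega_N)^n\big|_{\mathcal D}=\mathcal P(t)\,\omega^n\big|_{\mathcal D},
\]
where $\omega=d\lambda|_{\mathcal D}$ is nondegenerate. By the characterization in Proposition~\ref{prop:tensor}, $\omega_N(Y,Z)=d\sigma(Y,Z)=\omega(AY,Z)$.

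The form $\omega_t(Y,Z)=\omega((\id+tA)Y,Z)$ is skew. So $(\omega+t\omega_N)^n=\omega_t^n=Q(t)\,\omega^n$, where $Q(t)=\operatorname{Pf}(\omega_t)/\operatorname{Pf}(\omega)$ is a polynomial of degree at most $n$ with $Q(0)=1$. The standard identity $\det(\id+tA)=\operatorname{Pf}(\omega_t)^2/\operatorname{Pf}(\omega)^2$ then yields $\det(\id_{\mathcal D}+tA)=Q(t)^2$ and $\mathcal P=(1+t)Q$ on $U$.

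\emph{Step 3: extension to $V$.} I expect this to be the delicate point. Define $Q$ on all of $V$ by polynomial division, $q_1=C_1-1$, $q_j=C_j-q_{j-1}$; the $C_j$ are smooth on $V$ because $\beta$ is. Both identities in \eqref{eq:spectral-relation} are then equalities of smooth functions on $V$, so it suffices that $U$ is dense in $V$.

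Suppose $H\equiv0$ on an open $W\subset V$. Then $dH=0$ on $W$, so $\eta(X)=0$ and $\iota_Xd\eta=0$, which forces $X=0$. Hence $K=-\alpha(X)=0$ on $W$, a contradiction. So $\{H=0\}\cap V$ has empty interior and continuity finishes the argument. The remainder $C_{n+1}-q_n$ vanishes on $U$, hence everywhere on $V$.

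\emph{Step 4: traces.} Since $N$ takes values in $\mathcal D$, its matrix is block triangular with zero block on $TV/\mathcal D$. Hence $\tr(N^k)=\tr(A^k)$. Taking logarithms,
\[
\sum_{k\ge1}\frac{(-1)^{k+1}}{k}\tr(A^k)\,t^k=2\log Q(t).
\]
Thus every trace is a polynomial in $q_1,\dots,q_n$. Conversely, by Newton's identities the traces $\tr A,\dots,\tr A^n$ recursively determine $q_1,\dots,q_n$.

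Explicitly, $\tr A=2q_1=2(C_1-1)$. The $t^2$ coefficient of $Q^2$ is $e_2=2q_2+q_1^2$, so
\[
\tr A^2=(\tr A)^2-2e_2=2q_1^2-4q_2,
\]
with $q_2=C_2-C_1+1$; this is \eqref{eq:first-trace-relations}. Since all traces factor through the $n$ functions $q_j$, the bound of $n$ functionally independent trace invariants follows.
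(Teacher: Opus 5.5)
Your proposal is correct and follows essentially the same route as the paper. Both arguments factor $\eta+t\beta=-H(\lambda+t\sigma)$ on $U$, contract with $X$ and restrict to $\mathcal D$, and obtain $\det(\id_{\mathcal D}+tA)=Q(t)^2$ from the Pfaffian. Both then read the traces off $2\log Q(t)$ and extend to $V$ through the recursion $q_j=C_j-q_{j-1}$ together with density of $U$ in $V$.

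Your block-triangular justification of $\tr(N^k)=\tr(A^k)$ holds directly on all of $V$, including points where $H=0$ and $X$ lies in $\mathcal D$. The paper instead invokes $NX=0$ on $U$ and extends by continuity, so this is a minor tightening rather than a different approach.
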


\begin{proof}
First work on $U=\lbrace HK\neq 0\rbrace$.
Put
\[
\omega=d\lambda|_{\mathcal D},\qquad
\vartheta=d\sigma|_{\mathcal D},
\]
and define $Q(t)$ by
\[
(\omega+t\vartheta)^n=Q(t)\omega^n.
\]
This is a smooth polynomial in $t$, with $Q(0)=1$,
because $\omega$ is nondegenerate.

Since $\eta+t\beta=-H(\lambda+t\sigma)$,
\[
(\eta+t\beta)\wedge(d\eta+t\,d\beta)^n
=(-H)^{n+1}
(\lambda+t\sigma)\wedge(d\lambda+t\,d\sigma)^n.
\]
The terms containing $dH$ vanish in this wedge product.
Also,
\[
(\lambda+t\sigma)(X)=1+t,\qquad
\iota_X(d\lambda+t\,d\sigma)=0.
\]
Contract with $X$, restrict to $\mathcal D$, and compare
with $\mu=(-H)^{n+1}\lambda\wedge(d\lambda)^n$.
This proves $\mathcal P(t)=(1+t)Q(t)$.

In a local frame of $\mathcal D$, let $\Omega$ and
$\Theta$ be the matrices of $\omega$ and $\vartheta$.
The defining equation for $A$ is
$A^{\mathsf T}\Omega=\Theta$, so
\[
\Omega+t\Theta=(\id+tA)^{\mathsf T}\Omega.
\]
The determinant of a skew-symmetric matrix is the
square of its Pfaffian. Therefore,
\[
\det(\id+tA)
=\frac{\det(\Omega+t\Theta)}{\det\Omega}
=Q(t)^2.
\]
As formal power series at $t=0$,
\[
2\log Q(t)
=\sum_{k\geq1}\frac{(-1)^{k-1}}{k}
\tr(A^k)t^k.
\]
Since $NX=0$, $\tr(A^k)=\tr(N^k)$.
Comparison of coefficients proves the assertions.
In particular, $q_1=C_1-1$ and
$q_2=C_2-C_1+1$, with $q_2=0$ if $n=1$.

To extend the identities to $V$, set $q_0=1$ and define
\[
q_j=C_j-q_{j-1},\qquad j=1,\ldots,n.
\]
These functions are smooth on $V$ and agree on $U$ with
the coefficients obtained above. Moreover, $U$ is dense
in $V$: if $H$ vanished identically on a nonempty open
subset of $V$, then $X_H$ and hence
$K=-\alpha(X_H)$ would vanish there, a contradiction.
Since $N$ is smooth on $V$ by Proposition~5, all the
polynomial, determinant, and trace identities extend
from $U$ to $V$ by continuity.
\end{proof}

Thus Propositions~\ref{prop:ham-normalization}
and~\ref{prop:tensor} encode the same spectral
invariants throughout $\{K\neq0\}$, including points
where $H=0$. This equivalence concerns the normalization
$\beta=(H/K)\alpha$, rather than an arbitrary one-form
in Proposition~\ref{prop:pencil}.

\subsection{The dimension-three formula}

\begin{corollary}
\label{cor:three}
Suppose $\dim M=3$, and define $\rho$ by
$\Phi^*\mu=\rho\mu$, allowing $\rho$ to vanish.
On $U=\{HK\neq0\}$, put
\[
\Pi=\id_{TU}-X\otimes\lambda,\qquad
\kappa=\rho\left(\frac HK\right)^2.
\]
Then
\begin{equation}
N=\kappa \Pi,\qquad
\tr(N^k)=2\kappa^k,\qquad k\geq1.
\label{eq:three-N}
\end{equation}
In particular, these traces provide at most one
functionally independent invariant.
\end{corollary}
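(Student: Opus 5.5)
In dimension three we have $n=1$, so $\mathcal D|_U=\ker\lambda$ has rank two. The plan is to show that $A=N|_{\mathcal D}$ is a scalar multiple of the identity, identify the scalar, and then extend $N$ by $NX=0$. The key observation is that on a rank-two bundle every two-form is proportional to the nondegenerate form $\omega=d\lambda|_{\mathcal D}$. So $\vartheta=d\sigma|_{\mathcal D}=\kappa'\omega$ for a unique smooth $\kappa'$ on $U$. By the characterization of $N$ in Proposition~\ref{prop:tensor}, the defining relation $d\lambda(NY,Z)=d\sigma(Y,Z)$ for $Y,Z\in\mathcal D$ becomes $\omega(AY,Z)=\kappa'\omega(Y,Z)$. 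Nondegeneracy then forces $A=\kappa'\,\id_{\mathcal D}$.

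Since $TU=\mathcal D|_U\oplus\operatorname{span}\{X\}$, with $\lambda(X)=1$ and $NX=0$, this gives $N=\kappa'(\id-X\otimes\lambda)=\kappa'\Pi$. Indeed, $\Pi$ is exactly the projection onto $\mathcal D$ along $X$. As $\Pi^k=\Pi$ and $\tr\Pi=2$, we get $\tr(N^k)=2\kappa'^k$.

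It remains to show $\kappa'=\rho(H/K)^2$; this is the only computational step. I would compare top forms. First, $\sigma\wedge d\sigma=\kappa'\,\sigma\wedge d\lambda$ holds on all of $TU$. To see this, evaluate both three-forms on a frame $(X,Z_1,Z_2)$ with $Z_i\in\mathcal D$. We have $\iota_Xd\sigma=\iota_Xd\lambda=0$ and $\sigma(X)=1$, so only the term $\sigma(X)\,d\sigma(Z_1,Z_2)$ survives on each side. By the same reasoning, $\sigma\wedge d\lambda=\lambda\wedge d\lambda$, since both evaluate to $\omega(Z_1,Z_2)$ on this frame. So $\sigma\wedge d\sigma=\kappa'\lambda\wedge d\lambda$.

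Next, $\sigma=-\alpha/K$ gives $\sigma\wedge d\sigma=K^{-2}\alpha\wedge d\alpha=K^{-2}\rho\mu$, using $\alpha\wedge d\alpha=\Phi^*\mu=\rho\mu$ and the fact that the $dK$ term dies against $\alpha$. Similarly $\lambda\wedge d\lambda=H^{-2}\mu$. Hence $\kappa'=\rho H^2/K^2=\kappa$. The main care needed is the sign bookkeeping with the minus signs in $\sigma$ and $\lambda$, but they square away.

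Alternatively, one could cross-check with Proposition~\ref{prop:spectral}. For $n=1$ it gives $\det(\id+tA)=(1+q_1t)^2$, consistent with $A=\kappa\,\id$ and $q_1=\kappa=C_1-1$. Combined with $C_2=C_1-1$ and $C_2\mu=\beta\wedge d\beta=(H/K)^2\rho\mu$, this yields the same value of $\kappa$. Finally, since every trace is a function of the single function $\kappa$, at most one functionally independent invariant arises. This also follows directly from the bound $n=1$ in Proposition~\ref{prop:spectral}.
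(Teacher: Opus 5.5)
Your proof is correct and takes essentially the paper's route: both rest on $\lambda\wedge d\lambda=\mu/H^2$, $\sigma\wedge d\sigma=\rho\mu/K^2$, $\sigma(X)=\lambda(X)=1$, $\iota_Xd\sigma=\iota_Xd\lambda=0$, and the characterization of $N$ (with $\Pi^2=\Pi$, $\tr\Pi=2$). The only difference is that the paper gets $d\sigma=\kappa\,d\lambda$ in one step by contracting the two three-form identities with $X$ and then checks that $\kappa\Pi$ satisfies the defining equations, invoking uniqueness, whereas you first obtain proportionality abstractly from $\operatorname{rank}\mathcal D=2$ and then identify the factor by wedging.
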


\begin{proof}
In dimension three,
\[
\lambda\wedge d\lambda=\frac{\mu}{H^2},
\qquad
\sigma\wedge d\sigma=\frac{\rho\mu}{K^2}.
\]
Contracting with $X$ gives
\[
d\lambda=\frac{\iota_X\mu}{H^2},
\qquad
d\sigma=\frac{\rho\,\iota_X\mu}{K^2}
=\kappa\,d\lambda.
\]
The projection $\Pi$ takes values in $\ker\lambda$ and
satisfies $d\lambda(\Pi Y,Z)=d\lambda(Y,Z)$.
Uniqueness in Proposition~\ref{prop:tensor} gives
$N=\kappa \Pi$. Since $\Pi^2=\Pi$ and $\tr \Pi=2$,
\eqref{eq:three-N} follows.
\end{proof}

Expanding the projection gives
\[
N=\frac{\rho H^2}{K^2}\,\mathrm{id}
  +\frac{\rho H}{K^2}\,X\otimes\eta.
\]
This expression is smooth on $V=\{K\neq0\}$ and agrees
there with the tensor of Proposition~\ref{prop:tensor}, by continuity
from the dense subset $U$. Consequently,
\[
\operatorname{tr}(N^k)
=2\left(\frac{\rho H^2}{K^2}\right)^k
\]
holds throughout $V$.

\section{Construction and examples}
\label{sec:examples}

\subsection{Shears of the contact coordinate}

\begin{proposition}
\label{prop:shear}
In Darboux coordinates $(q^i,p_i,z)$, let
\[
\eta=dz-p_i\,dq^i,\qquad
H=h(q,p)+\gamma z,
\]
where $\gamma$ is constant. For a smooth function
$f(q,p)$, consider
\[
\Phi_f(q,p,z)=(q,p,z+f(q,p)).
\]
This shear is a dissipation-preserving contact canonoid
transformation if and only if
\begin{equation}
d\bigl(X_H(f)+\gamma f\bigr)=0.
\label{eq:shear-condition}
\end{equation}
On a connected domain, this is equivalent to
$X_H(f)+\gamma f=C$ for a constant $C$, and then
\begin{equation}
K=H+\gamma f-C.
\label{eq:shear-K}
\end{equation}
\end{proposition}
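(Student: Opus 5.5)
Since $\Phi_f$ fixes $q,p$ and shifts $z$ by $f(q,p)$, its pullback is explicit:
\[
\alpha=\Phi_f^*\eta=d(z+f)-p_i\,dq^i=\eta+df.
\]
Because $D\Phi_f$ is unipotent, $\Phi_f$ is a global diffeomorphism, so only the pullback condition needs checking. I will verify \eqref{eq:equal-rates} directly rather than through \eqref{eq:pullback-equations}.

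The first step is to compute $\mathcal L_X\alpha$. By \eqref{eq:original-rate}, $\mathcal L_X\eta=-a\eta$ with $a=R(H)=\partial_zH=\gamma$, which is constant. Since $\mathcal L_X$ commutes with $d$,
\[
\mathcal L_X\alpha=-\gamma\eta+d\bigl(X(f)\bigr).
\]
The dissipation-preserving condition requires this to equal $-\gamma\alpha=-\gamma\eta-\gamma\,df$. This holds if and only if $d\bigl(X(f)\bigr)+\gamma\,df=0$. Because $\gamma$ is constant, this is $d\bigl(X(f)+\gamma f\bigr)=0$, which is \eqref{eq:shear-condition}. Both directions follow at once, since the computation is an identity.

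On a connected domain, a closed exact differential $dg=0$ forces $g$ to be constant, so $X_H(f)+\gamma f=C$.

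For \eqref{eq:shear-K}, I compute $K=-\alpha(X)=-\eta(X)-df(X)=H-X(f)$, using $\eta(X_H)=-H$. Substituting $X(f)=C-\gamma f$ gives $K=H+\gamma f-C$.

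Two points need care, though neither is a real obstacle. The first is that $a$ is exactly the constant $\gamma$, since $h$ does not depend on $z$. Constancy of $a$ is what lets $\gamma\,df$ be written as $d(\gamma f)$. The second is noting that $X(f)$ may depend on $z$ through $X$, yet $X(f)$ involves only $\partial_q$ and $\partial_p$ of $f$ with coefficients depending on $(q,p)$. This does not affect the argument, because the condition is stated as an exact differential on $M$.
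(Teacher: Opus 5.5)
Your proposal is correct and follows essentially the same route as the paper's proof. Both arguments use $\Phi_f^*\eta=\eta+df$, the invertibility of the shear (its inverse replaces $f$ by $-f$), the identity $\mathcal L_{X_H}(\Phi_f^*\eta)+\gamma\,\Phi_f^*\eta=d(X_H(f)+\gamma f)$, and then $K=H-X_H(f)$.
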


\begin{proof}
The inverse shear replaces $f$ by $-f$, and
$\Phi_f^*\eta=\eta+df$. Since
$\mathcal L_{X_H}\eta=-\gamma\eta$,
\[
\mathcal L_{X_H}(\Phi_f^*\eta)
+\gamma\Phi_f^*\eta
=d\bigl(X_H(f)+\gamma f\bigr).
\]
This proves \eqref{eq:shear-condition}.
Finally, $K=-(\eta+df)(X_H)=H-X_H(f)$,
which gives \eqref{eq:shear-K}.
\end{proof}

Equation \eqref{eq:shear-condition} provides a direct
way to construct maps: choose an ansatz for $f$ and
solve the resulting equations for its coefficients.

\subsection{Singular maps for the damped free particle}

Consider
\[
\eta=dz-p\,dq,\qquad
H=\frac{p^2}{2}+\gamma z,\qquad \gamma>0.
\]
Its Hamiltonian vector field is
\[
X=p\partial_q-\gamma p\partial_p
+\left(\frac{p^2}{2}-\gamma z\right)\partial_z.
\]
The Darboux coordinates
\begin{equation}
u=q+\frac p\gamma,\qquad
w=z+\frac{p^2}{2\gamma}
\label{eq:free-coordinates}
\end{equation}
give
\begin{equation}
\eta=dw-p\,du,\qquad
H=\gamma w,\qquad
X=-\gamma p\partial_p-\gamma w\partial_w.
\label{eq:free-adapted}
\end{equation}

\begin{example}[A polynomial invariant from a rank-two map]
The map
\[
\Phi(u,p,w)=(u,0,e^u w)
\]
has rank two everywhere and satisfies
\[
\alpha=d(e^u w),\qquad
\mathcal L_X\alpha=-\gamma\alpha,\qquad
K=e^uH.
\]
Since $d\alpha=0$,
\[
(\eta+t\alpha)\wedge d(\eta+t\alpha)
=(1+te^u)\mu.
\]
Thus
\[
C_1=e^{q+p/\gamma},\qquad
K=e^{q+p/\gamma}\left(\frac{p^2}{2}+\gamma z\right)
\]
are respectively conserved and dissipated on all of
$\mathbb R^3$, including $H=0$.
Here $\rho=0$, so contact-volume normalization is
unavailable. On $w\neq0$, Corollary~\ref{cor:three}
gives $N=0$. This illustrates that the unnormalized
polynomial construction can retain information
not present in the normalized tensor.
\end{example}

\begin{example}[A tensor smooth across the singular locus]
For the same system, let
\[
\Phi(u,p,w)=(u,up,w).
\]
Then
\[
\alpha=dw-up\,du,\qquad
\mathcal L_X\alpha=-\gamma\alpha,\qquad
K=H,\qquad \rho=u.
\]
Thus $\Phi$ is singular on $u=0$.
On $U=\{w\neq0\}$, Corollary~\ref{cor:three} yields
\[
N=u\Pi,\qquad
\Pi=\id-X\otimes\lambda,\qquad
\lambda=\frac{p\,du-dw}{\gamma w}.
\]
In the coordinate basis $(\partial_u,\partial_p,\partial_w)$,
\begin{equation}
[N]=u
\begin{pmatrix}
1&0&0\\
p^2/w&1&-p/w\\
p&0&0
\end{pmatrix}.
\label{eq:free-N}
\end{equation}
Its eigenvalues are $0,u,u$, so
\[
I_k=2\left(q+\frac p\gamma\right)^k,\qquad
D_k=2\left(\frac{p^2}{2}+\gamma z\right)
\left(q+\frac p\gamma\right)^k.
\]
These scalar functions extend to all of $\mathbb R^3$
and obey $X(I_k)=0$, $X(D_k)=-\gamma D_k$.
The tensor itself is smooth at $u=0$ whenever $w\neq0$.

The polynomial construction gives
\[
(\eta+t\alpha)\wedge d(\eta+t\alpha)
=(1+t)(1+tu)\mu,
\]
so $C_1=1+u$ and $C_2=u$, in agreement with
Proposition~\ref{prop:spectral}.
\end{example}

\subsection{A transformation with unequal dissipation rates}

\begin{example}
For \eqref{eq:free-adapted}, restrict to $p>0,w>0$
and take
\[
\Phi(u,p,w)=(u,p^2,w^2).
\]
This is a diffeomorphism of that domain onto itself.
Its pullback and associated Hamiltonian are
\[
\alpha=2w\,dw-p^2du,\qquad K=2\gamma w^2.
\]
A direct calculation gives
\[
\mathcal L_X\alpha=-2\gamma\alpha,\qquad
a=\gamma,\quad b=2\gamma,\qquad
\rho=4pw.
\]
Although $K/H=2w$ is not conserved,
Proposition~\ref{prop:volume} gives
\[
I_\Phi=\frac{K}{H\sqrt{\rho}}
=\sqrt{\frac wp},
\qquad
J_\Phi=\gamma w\sqrt{\frac wp}.
\]
They satisfy $X(I_\Phi)=0$ and
$X(J_\Phi)=-\gamma J_\Phi$.

The Hamiltonian normalization is
\[
\beta=\frac HK\alpha
=dw-\frac{p^2}{2w}\,du,
\]
and its polynomial is
\[
\mathcal P(t)=(1+t)\left(1+t\frac pw\right).
\]
Consistently, Corollary~\ref{cor:three} gives
\[
N=\frac pw \Pi,\qquad
\tr(N^k)=2\left(\frac pw\right)^k.
\]
\end{example}

\subsection{The damped harmonic oscillator}

\begin{example}
Consider the contact formulation of the damped
harmonic oscillator \cite{Bravetti2017,GGMRR2020}:
\[
\eta=dz-y\,dx,\qquad
H=\frac{y^2}{2m}+\frac{m\omega^2x^2}{2}+\gamma z,
\qquad m,\omega,\gamma>0.
\]
Then
\[
X=\frac ym\partial_x-(m\omega^2x+\gamma y)\partial_y
+\left(\frac{y^2}{2m}-\frac{m\omega^2x^2}{2}
-\gamma z\right)\partial_z.
\]
A homogeneous quadratic ansatz for a function satisfying
$X(F)=-\gamma F$ gives, up to a constant factor,
\[
F=\frac{y^2}{2m}+\frac{m\omega^2x^2}{2}
+\frac{\gamma}{2}xy.
\]
Indeed, if $E=y^2/(2m)+m\omega^2x^2/2$, then
\[
X(E)=-\frac{\gamma}{m}y^2,\qquad
X(xy)=\frac{y^2}{m}-m\omega^2x^2-\gamma xy,
\]
and hence $X(F)=-\gamma F$.

Proposition~\ref{prop:shear}, with $f=F/\gamma$,
gives the global canonoid transformation
\[
\Phi(x,y,z)=\left(x,y,z+\frac F\gamma\right),
\qquad
\alpha=\eta+\frac{dF}{\gamma},
\qquad
K=H+F.
\]
Here $\rho=1$ and $X(K)=-\gamma K$.
On $U=\{H(H+F)\neq0\}$,
\begin{equation}
N=\left(\frac{H}{H+F}\right)^2
\left(\id+\frac1H X\otimes\eta\right).
\label{eq:oscillator-N}
\end{equation}
In the coordinate basis $(\partial_x,\partial_y,\partial_z)$,
\[
[N]=\frac{H}{K^2}
\begin{pmatrix}
H-y^2/m&0&y/m\\
y(m\omega^2x+\gamma y)&H&-m\omega^2x-\gamma y\\
y(H-y^2/m)&0&y^2/m
\end{pmatrix}.
\]
Its eigenvalues are $0,(H/K)^2,(H/K)^2$, and
\begin{equation}
I_k=2\left(\frac{H}{H+F}\right)^{2k},
\qquad
D_k=\frac{2H^{2k+1}}{(H+F)^{2k}}.
\label{eq:oscillator-invariants}
\end{equation}
All traces are functions of the first integral
$J=F/H$, since $I_k=2(1+J)^{-2k}$.
They supply one functionally independent invariant
on an open dense subset of $U$.

These formulas apply in the underdamped, critically
damped, and overdamped regimes. The expression
\[
N=\frac{H^2}{K^2}\id+\frac{H}{K^2}X\otimes\eta
\]
extends smoothly to $K\neq0$, including $H=0$.
Its invariance and the balance laws for
\eqref{eq:oscillator-invariants} extend by smoothness
from the dense subset $U$.
\end{example}

\subsection{Gravity with linear friction}

\begin{example}
Consider the system discussed in
\cite[Section~5.2]{GGMRR2020}, with
\[
\eta=dz-p_x\,dx-p_y\,dy,\qquad
H=\frac{p_x^2+p_y^2}{2m}+mg_0y+\gamma z,
\]
where $m>0$, $\gamma>0$, and $g_0\in\mathbb R$.
Its vector field is
\begin{align*}
X={}&\frac{p_x}{m}\partial_x+\frac{p_y}{m}\partial_y
-\gamma p_x\partial_{p_x}
-(mg_0+\gamma p_y)\partial_{p_y}\\
&+\left(\frac{p_x^2+p_y^2}{2m}-mg_0y-\gamma z\right)
\partial_z.
\end{align*}
Put $A=mg_0+\gamma p_y$, so $X(A)=-\gamma A$.
For a nonzero constant $\varepsilon$, the shear
\[
\Phi_\varepsilon(x,y,p_x,p_y,z)
=(x,y,p_x,p_y,z+\varepsilon p_y)
\]
satisfies Proposition~\ref{prop:shear}, because
\[
X(\varepsilon p_y)+\gamma\varepsilon p_y
=-\varepsilon mg_0.
\]
Consequently,
\[
\alpha=\eta+\varepsilon\,dp_y,\qquad
K=H+\varepsilon A,\qquad
\mathcal L_X\alpha=-\gamma\alpha.
\]
For $\varepsilon\neq0$, $\alpha$ is not proportional
to $\eta$. The associated conserved quantity is
\begin{equation}
J_y=\frac{K-H}{\varepsilon H}
=\frac{mg_0+\gamma p_y}{H}.
\label{eq:gravity-Jy}
\end{equation}
Both $A$ and $K$ are globally dissipated.
In the image coordinates, the transformed Hamiltonian is
\[
\widetilde H=K\circ\Phi_\varepsilon^{-1}
=\frac{\widetilde p_x^2+\widetilde p_y^2}{2m}
+mg_0\widetilde y+\gamma\widetilde z+\varepsilon mg_0.
\]

To compute $N$ on $U=\{HK\neq0\}$, set
\[
r=\frac HK,\qquad c=\frac{\varepsilon H}{K^2},
\qquad \Pi=\id-X\otimes\lambda,
\]
and define
\[
E_x=\partial_x+p_x\partial_z,\qquad
E_y=\partial_y+p_y\partial_z,\qquad
W=X+H\partial_z.
\]
These fields lie in $\mathcal D=\ker\eta$, and
\[
W=\frac{p_x}{m}E_x+\frac{p_y}{m}E_y
-\gamma p_x\partial_{p_x}-A\partial_{p_y},
\]
with
$\iota_Wd\eta=dH-\gamma\eta$ and
$\iota_{E_y}d\eta=dp_y$.
Since $d\alpha=d\eta$, restriction to
$\mathcal D\times\mathcal D$ gives
\[
d\lambda|_{\mathcal D}=-\frac1H d\eta|_{\mathcal D},
\qquad
d\sigma|_{\mathcal D}
=-\frac1K d\eta|_{\mathcal D}
+\frac{\varepsilon}{K^2}(dH\wedge dp_y)|_{\mathcal D}.
\]
It follows that
\begin{equation}
N=r\Pi+c\left[
W\otimes(dp_y\circ \Pi)-E_y\otimes(dH\circ \Pi)
\right],
\label{eq:gravity-N}
\end{equation}
where
\[
dp_y\circ \Pi=dp_y-\frac AH\eta,\qquad
dH\circ \Pi=dH-\gamma\eta.
\]
Indeed, \eqref{eq:gravity-N} takes values in
$\mathcal D$, annihilates $X$, and for
$Y,Z\in\mathcal D$ satisfies
\[
d\lambda(NY,Z)
=-\frac1K d\eta(Y,Z)
+\frac{\varepsilon}{K^2}
\bigl[dH(Y)dp_y(Z)-dp_y(Y)dH(Z)\bigr]
=d\sigma(Y,Z).
\]
Since both two-forms annihilate $X$, the identity
holds for arbitrary $Y,Z$.

In the ordered frame
$\mathcal B=(X,E_y,E_x,\partial_{p_x},\partial_{p_y})$,
using $r-cA=r^2$, we obtain
\begin{equation}
[N]_{\mathcal B}=
\begin{pmatrix}
0&0&0&0&0\\
0&r^2&-c\gamma p_x&-cp_x/m&0\\
0&0&r&0&cp_x/m\\
0&0&0&r&-c\gamma p_x\\
0&0&0&0&r^2
\end{pmatrix}.
\label{eq:gravity-matrix}
\end{equation}
This frame is well defined on $U$ because
$\eta(X)=-H\neq0$. Hence
\[
\det(t\id-N)=t(t-r)^2(t-r^2)^2,
\]
and
\begin{equation}
I_k=2r^k+2r^{2k},\qquad
D_k=2Hr^k+2Hr^{2k}
\label{eq:gravity-invariants}
\end{equation}
are respectively conserved and dissipated.
The formulas do not require diagonalizability.
For $\beta=(H/K)\alpha$, the polynomial in
Proposition~\ref{prop:spectral} is
\[
\mathcal P(t)=(1+t)(1+rt)(1+r^2t).
\]
All these spectral invariants depend on $J_y$ through
$r=(1+\varepsilon J_y)^{-1}$.

The horizontal shear
$\Psi_\varepsilon(x,y,p_x,p_y,z)
=(x,y,p_x,p_y,z+\varepsilon p_x)$ gives
\[
K_x=H+\varepsilon\gamma p_x,\qquad
J_x=\frac{K_x-H}{\varepsilon\gamma H}=\frac{p_x}{H}.
\]
The two conserved quantities $J_x,J_y$ are
functionally independent on an open dense subset
of $\{H\neq0\}$, since
\[
\frac{\partial(J_x,J_y)}{\partial(p_x,z)}
=-\frac{\gamma A}{H^3}\neq0
\quad\text{when }A\neq0.
\]
The reciprocal $H/p_x$, defined where $p_x\neq0$,
is the conserved quantity discussed in
\cite[Section~5.2]{GGMRR2020}.
\end{example}

\subsection{Two independent trace invariants in dimension five}
\begin{example}
Consider the two-dimensional damped free particle,
\[
\eta=dz-p_1\,dq^1-p_2\,dq^2,
\qquad
H=\frac{p_1^2+p_2^2}{2}+\gamma z,
\qquad \gamma>0.
\]
The change of coordinates
\[
u_i=q^i+\frac{p_i}{\gamma},
\qquad
w=z+\frac{p_1^2+p_2^2}{2\gamma}
\]
gives
\[
\eta=dw-p_1\,du_1-p_2\,du_2,
\qquad
H=\gamma w,
\qquad
X=-\gamma p_1\partial_{p_1}
  -\gamma p_2\partial_{p_2}
  -\gamma w\partial_w.
\]

Define
\[
\Phi(u_1,u_2,p_1,p_2,w)
=(u_1,u_2,u_1p_1,u_2p_2,w).
\]
Its pullback is
\[
\alpha=\Phi^*\eta
=dw-u_1p_1\,du_1-u_2p_2\,du_2,
\]
and direct calculation yields
\[
\mathcal L_X\alpha=-\gamma\alpha,
\qquad
K=-\alpha(X)=\gamma w=H.
\]
Thus $\Phi$ is a dissipation-preserving contact canonoid map.
Writing $\mu=\eta\wedge(d\eta)^2$, we have
\[
\Phi^*\mu=u_1u_2\,\mu,
\qquad
\rho=u_1u_2.
\]
The map has rank five where $u_1u_2\neq0$, rank four where
exactly one of $u_1,u_2$ vanishes, and rank three where
$u_1=u_2=0$.

The polynomial construction is defined on all of
$\mathbb R^5$. Indeed,
\[
\eta+t\alpha
=(1+t)\,dw
 -(1+tu_1)p_1\,du_1
 -(1+tu_2)p_2\,du_2,
\]
so
\[
(\eta+t\alpha)\wedge(d\eta+t\,d\alpha)^2
=(1+t)(1+tu_1)(1+tu_2)\,\mu.
\]
Consequently,
\[
C_1=1+u_1+u_2,\qquad
C_2=u_1+u_2+u_1u_2,\qquad
C_3=u_1u_2
\]
are globally defined first integrals.

On $U=\{w\neq0\}$, put
\[
\lambda=-\frac{\eta}{H},
\qquad
\sigma=-\frac{\alpha}{H}.
\]
In the coordinate basis
$(\partial_{u_1},\partial_{u_2},
  \partial_{p_1},\partial_{p_2},\partial_w)$,
the invariant tensor is
\[
[N]=
\begin{pmatrix}
u_1 & 0 & 0 & 0 & 0\\
0 & u_2 & 0 & 0 & 0\\
\dfrac{u_1p_1^2}{w}
&
\dfrac{u_2p_1p_2}{w}
&
u_1 & 0 & -\dfrac{u_1p_1}{w}\\
\dfrac{u_1p_1p_2}{w}
&
\dfrac{u_2p_2^2}{w}
&
0 & u_2 & -\dfrac{u_2p_2}{w}\\
u_1p_1 & u_2p_2 & 0 & 0 & 0
\end{pmatrix}.
\]
This expression is smooth across the singular locus
$\{u_1u_2=0\}\cap U$.

To verify the formula and identify its spectrum, define
\[
E_i=\partial_{u_i}+p_i\partial_w,
\qquad
c=\frac{(u_1-u_2)p_1p_2}{w}.
\]
The fields $E_1,E_2,\partial_{p_1},\partial_{p_2}$ span
$\mathcal D=\ker\eta$. On this distribution,
\[
d\lambda|_{\mathcal D}
=-\frac{1}{\gamma w}
 \bigl(du_1\wedge dp_1+du_2\wedge dp_2\bigr)|_{\mathcal D},
\]
and
\[
d\sigma|_{\mathcal D}
=-\frac{1}{\gamma w}
 \bigl(
 u_1\,du_1\wedge dp_1
 +u_2\,du_2\wedge dp_2
 -c\,du_1\wedge du_2
 \bigr)|_{\mathcal D}.
\]
The displayed tensor satisfies
\[
NX=0,\qquad
NE_1=u_1E_1+c\,\partial_{p_2},
\qquad
NE_2=u_2E_2-c\,\partial_{p_1},
\]
and
\[
N\partial_{p_1}=u_1\partial_{p_1},
\qquad
N\partial_{p_2}=u_2\partial_{p_2}.
\]
These identities verify its defining equations. Since
$\eta(X)=-\gamma w\neq0$ on $U$, the ordered frame
\[
\mathcal B=(X,E_1,E_2,\partial_{p_1},\partial_{p_2})
\]
is well defined there, and
\[
[N]_{\mathcal B}
=
\begin{pmatrix}
0&0&0&0&0\\
0&u_1&0&0&0\\
0&0&u_2&0&0\\
0&0&-c&u_1&0\\
0&c&0&0&u_2
\end{pmatrix}.
\]
Therefore,
\[
\det(s\,\mathrm{id}-N)
=s(s-u_1)^2(s-u_2)^2,
\]
and, for every $k\geq1$,
\[
I_k=\operatorname{tr}(N^k)
=2(u_1^k+u_2^k),
\qquad
D_k=2\gamma w(u_1^k+u_2^k).
\]
These scalar functions extend smoothly to all of
$\mathbb R^5$ and satisfy
\[
X(I_k)=0,\qquad X(D_k)=-\gamma D_k.
\]

In particular,
\[
I_1=2(u_1+u_2),
\qquad
I_2=2(u_1^2+u_2^2),
\]
and
\[
dI_1\wedge dI_2
=8(u_2-u_1)\,du_1\wedge du_2.
\]
Thus a single tensor produces two functionally independent
trace invariants on the open dense subset
$U\cap\{u_1\neq u_2\}$, attaining the upper bound for $n=2$.
This independence also holds at singular points of $\Phi$
where exactly one of $u_1,u_2$ vanishes.
\end{example}

\section{Concluding remarks}

The pullback definition permits conserved and dissipated
quantities to be associated with contact canonoid maps
even when the maps are singular. Polynomial invariants
require only a one-form with the appropriate dissipation
rate. The invariant tensor is smooth wherever the
associated Hamiltonian $K$ is nonzero and vanishes at
points where $H=0$. For the Hamiltonian-normalized pencil,
its coefficients and the tensor traces encode the same
spectral information throughout $\{K\neq0\}$.

The examples distinguish the roles of singularity, normalization,
and unequal dissipation rates. The shear condition also gives a
practical starting point for constructing maps through polynomial
ansatzes. Determining when the resulting invariants are independent,
when they satisfy suitable involutivity conditions, and when
singular maps admit contact Hamiltonian descriptions on their images
requires additional hypotheses.

\section*{Acknowledgements}

This research was supported by the European Union and the Czech Ministry of Education under project CZ.02.01.01/00/22\_011/0008569 "Czech Technical University - International Postdoc Programme CROP".

\section*{AI Use Statement}
The author acknowledges the use of Artificial intelligence (AI)-assisted tools for the preparation of this manuscript. All results obtained with the assistance of these tools were independently reviewed and verified by the author. The author takes full responsibility for the accuracy, originality, and integrity of the final manuscript.

\bibliography{refs} 
\bibliographystyle{unsrt} 

\end{document}